\begin{document}
\renewcommand{\d}{\mathrm{d}}
\newtheorem{theorem}{Theorem}[section]
\numberwithin{equation}{section}
\renewcommand{\figurename}{Fig.}

\title{A tensor model for nematic phases of bent-core molecules based on molecular theory}
\author{Jie Xu$^1$\footnote{Current address: Department of Mathematics, Purdue University, West Lafayette, IN 47907, USA}, Fangfu Ye$^{2,3}$ \& Pingwen Zhang$^1$\footnote{Corresponding author}\\[2mm]
{\small $^1$LMAM \& School of Mathematical Sciences, Peking University, 
  Beijing 100871, China}\\[1mm]
{\small $^2$Beijing National Laboratory for Condensed Matter Physics, }\\
{\small Institute of Physics, Chinese Academy of Sciences, Beijing 100190, China}\\[1mm]
{\small $^3$School of Physical Sciences, University of Chinese Academy of Sciences,
  Beijing 100094, China}\\[3mm]
{\small E-mail: rxj\_2004@126.com,\, fye@iphy.ac.cn,\, pzhang@pku.edu.cn}\\
}
\date{\today}
\maketitle
\begin{abstract}
We construct a tensor model for nematic phases of bent-core molecules from molecular theory. 
The form of free energy is determined by molecular symmetry, 
which includes the couplings and derivatives of a vector and two second-order tensors, 
with the coefficients determined by molecular parameters. 
We use the model to study the nematic phases resulted from the hard-core potential. 
Unlike most macroscopic models, we are able to obtain the phase diagram about the molecular parameters, but not merely some phenomenological coefficients. 
The tensor model is applicable to other molecules with the same symmetry, 
which we demonstrate by studying the phase diagram of star molecules. 

\vspace{12pt}
\textbf{Keywords}: Liquid crystals; Bent-core molecules; Tensor model; Molecular theory; Modulated nematic phases; Twist-bend phase. 
\end{abstract}

\section{Introduction}
The ability to show complex orientational order has drawn much attention 
of the liquid crystal community to bent-core molecules. 
This feature originates from the $C_{2v}$\footnote{This is the Sch\"onflies notation.} molecular symmetry that breaks the 
axisymmetry of a rod-like molecule. 
The polar and biaxial order is notable in layer or columnar structures \cite{Smectic,JJAP}. 
The homogeneous biaxial nematic phase is also observed \cite{BiExp_prl2004,BiExp_prl2004_2} spontaneously formed by bent-core molecules without imposing external forces. 
Moreover, bent-core molecules are able to exhibit modulated nematic phases that 
have constant number density in space but show modulation in orientational distribution. 
The prediction has been made very early \cite{Mayer,EuroPhys_56_247}. 
Later the twist-bend phase has been identified experimentally 
\cite{JMC_2002,pre_78_061705,prl_105_167801,prl_111_067801,Ntb,Ntb2}. 

The modulated nematic phases have also been discussed theoretically 
with different macroscopic phenomenological models \cite{prl82_940,pre2013,pre2014,pre2014_2,prl_3D_2014}. 
They are helpful to understanding the phase behaviors. 
However, since these models focus on particular phase transitions only, the order parameters and terms in the free energy are incomplete. 
In particular, all of these models do not include the biaxial nematic phase, which is studied separately in the literature (see \cite{biaxl2} and the references therein). 
Also, these models provide little information about the effect of molecular interaction on the phase transition. 
Some models incorporate microscopic interaction \cite{greco2014Molecular,tomczyk2016Twist}, but the desired phases are induced under artificial external forces that resemble the structure of the phases. 
On the other hand, molecular simulations 
\cite{JCP_111_9871,LC_29_483,PRE_67_011703,JCP_123_174907,JCP_129_244903} 
have also been carried out for bent-core molecules. 
A recent work \cite{PRL_115_147805} uses both molecular theory and molecular simulation to study a curved molecule that can exhibit the twist-bend phase. Molecular theory or molecular simulation can, indeed, build connection between molecular interaction and phase behaviors, but they are also costly in computation. 

Understanding the connection between the molecular interaction and the resulting phase behaviors is the fundamental problem for liquid crystals. 
It is more significant for bent-core molecules, because ample experimental results suggest that the phase behaviors of bent-core molecules can be sensitively dependent on specific molecular architecture \cite{JJAP}. 
To achieve this goal, it is necessary that we are armed with a model that 
(1) clearly reflects the role of molecular interaction; 
(2) can be solved efficiently, so that we can systematically examine the effect of physical parameters without spending very long time. 
Of all the models we mentioned above, microscopic models only meet the first requirement, while macroscopic models only meet the second, although some efforts are made to match both goals \cite{kusumaatmaja2015free,robinson2017molecular}.

In \cite{RodModel}, a tensor model is constructed for rod-like molecules, which takes a macroscopic form, while carrying information of the microscopic interaction. 
Starting from the molecular theory that includes the entropy and the pairwise interaction, the model is derived by the expansion of the spatial moments of the kernel function, along with the Bingham approximation \cite{bingham1974antipodally} that minimizes the entropy term with the value of second-order tensor fixed. 
By adopting the hard-core interaction and a simple molecular geometry, analytical calculations can be done in the expansion. 
In the resulting model, the free energy is expressed by some tensors, with the coefficients being functions of molecular parameters. 

For general cases where analytical calculations are not available, we have discussed the expansion for homogeneous phases \cite{SymmO}. 
First, we analyze the symmetries of the spatially homogeneous kernel function that originate from the molecular symmetry. 
Then, we are able to choose a finite dimensional polynomial space satisfying the symmetries. 
Since we can separate variables for each monomial, the free energy can be expressed by some tensors if we use any function in the polynomial space to approximate the kernel function. 
When the truncation criterion is fixed, the polynomial space is determined by the symmetries. 
Therefore, the form of free energy, as well as the tensors that appear in the free energy and serve as order parameters, is determined by molecular symmetry. 
For bent-core molecules, if we truncate at second order, the order parameters include three tensors, one first-order and two second-order. 
Finally, we calculate the projection of the kernel function in the polynomial space to derive the coefficients. In this way, the coefficients receive the information of molecular interaction in the kernel function, and are expressed as functions of molecular parameters. 

The purpose of this paper is to construct a tensor model for inhomogeneous phases. 
Now the kernel function is not spatially homogeneous, so we need to approximate its spatial moments as in \cite{RodModel}. 
In this case, the procedure for spatially homogeneous phases is still applicable with significant extensions technically. 
In particular, we need to find a suitable representation of spatial moments before writing down the approximation polynomial space. 
The resulting free energy is still a functional of the three tensors obtained for the homogeneous phases, but contains couplings and derivatives that enable us to study modulated nematic phases. 
For the entropy term, we follow the idea for rod-like molecules by minimizing it with the value of three tensors fixed. 
The model has the following features. 
\begin{itemize}
\item The form of free energy is determined by the molecular symmetry. 
Thus, the model is applicable to any molecule with the same symmetry. 
Moreover, the free energy is independent of the choice of the reference space-fixed orthonormal frame. 
\item Under certain truncation criterion, the model includes all the terms allowed by the molecular symmetry. 
Thus, the model is not specifically designed for certain phase transitions. 
\item For molecules with the same symmetry and different architecture or interaction, they are differentiated by the coefficients that are derived as functions of molecular parameters. 
\item As a special case, the model reduces to a model for rod-like molecules if the bending is straightened. 
\end{itemize}

We use the model to study the nematic phases of bent-core molecules resulting from the hard-core interaction, 
and find that the uniaxial and biaxial nematic phases, as well as the modulated twist-bend phase are all possible to occur, 
which cover all the nematic phases found experimentally so far. 
We obtain the phase diagram about molecular parameters, showing how the molecular parameters affect the modulation in the twist-bend phases. 
In addition, we examine the nematic phases of star molecules, a variant of bent-core molecules, to illustrate the effect of the molecular shape on the phase behavior. 
To our knowledge, it is the first result in which the phase behavior about the molecular shape is systematically examined in a theoretical model. 

The rest of paper is organized as follows. In Sec. \ref{model} we derive the tensor model from molecular theory. 
The numerical results are presented in Sec. \ref{results}. 
A concluding remark is given in Sec. \ref{concl}. 
Some details are given in Appendix. 

\section{The tensor model\label{model}}
\begin{figure}
\centering
\includegraphics[width=.3\textwidth,keepaspectratio]{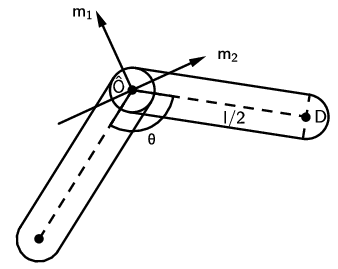}
\includegraphics[width=.3\textwidth,keepaspectratio]{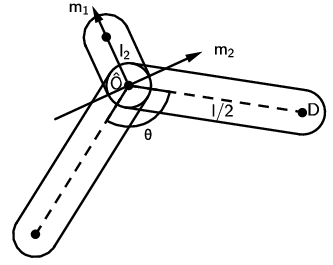}
\caption{\label{mol0}A bent-core molecule (left) and a star molecule (right). }
\end{figure}
\subsection{Notations}
We consider bent-core molecules and star molecules, drawn in Fig. \ref{mol0}. 
A bent-core molecule has two identical arm joint with fixed angle $\theta$. 
Each arm is a cylinder with two spherical caps, with the length $l/2$ and the diameter $D$. 
A star molecule has a third arm of the length $l_2$ along the arrowhead direction. 
Both molecules are regarded as fully rigid. 
Thus, the position and orientation of a molecule are represented by those of the 
orthonormal frame $(\hat{O};\bm{m}_1,\bm{m}_2,\bm{m}_3)$ mounted on it. 
As shown in Fig. \ref{mol0}, $\bm{m}_1$ points toward the arrowhead direction, and 
$\bm{m}_2$ is along the connection of the farther ends of two arms. 
Both molecules have the $C_{2v}$ symmetry, which allows the symmetry plane $\hat{O}\bm{m}_1\bm{m}_2$ and the twofold rotational symmetry round $\bm{m}_1$. 
Denote by $\bm{x}\in \mathbb{R}^3$ the position of $\hat{O}$ and by 
$P\in SO(3)$ the orientation of the frame. 
The matrix representation of $P$, which consists of the components of $\bm{m}_i$, 
can be expressed by Euler angles, 
\begin{align}
P=&(\bm{m}_1,\bm{m}_2,\bm{m}_3)
=\left(\begin{array}{ccc}
m_{11} & m_{21} & m_{31}\\
m_{12} & m_{22} & m_{32}\\
m_{13} & m_{23} & m_{33}
\end{array}\right)\nonumber\\
=&\left(
\begin{array}{ccc}
 \cos\alpha &\quad -\sin\alpha\cos\gamma &\quad\sin\alpha\sin\gamma\\
 \sin\alpha\cos\beta &\quad\cos\alpha\cos\beta\cos\gamma-\sin\beta\sin\gamma &
 \quad -\cos\alpha\cos\beta\sin\gamma-\sin\beta\cos\gamma\\
 \sin\alpha\sin\beta &\quad\cos\alpha\sin\beta\cos\gamma+\cos\beta\sin\gamma &
 \quad -\cos\alpha\sin\beta\sin\gamma+\cos\beta\cos\gamma
\end{array}
\right).\label{EulerRep}
\end{align}
The uniform probability measure on $SO(3)$ is given by
$$
\d P=\frac{1}{8\pi^2}\sin\alpha\d\alpha\d\beta\d\gamma. 
$$
We can also view $\bm{m}_i$ and $m_{ij}$ as functions of $P$. 
In what follows, we use the notation $\bm{m}_i(P)$ and $m_{ij}(P)$ 
to represent the $\bm{m}_i$ and $m_{ij}$ determined by a certain $P$. 

The summation over repeated indices will be used. 
The product $\bm{m}_1\bm{m}_1$ is recognized as tensor product and results in a second-order tensor, while $\bm{m}_1\cdot\bm{m}_2$ is the inner product. 
For a second-order tensor $Q$, we use $|Q|^2=Q:Q=Q_{ij}Q_{ij}$. 

\subsection{The derivation of tensor model}
Our starting point is the second virial expansion. 
The free energy includes the entropy and the contribution of pairwise molecular interaction, 
\begin{align}
\frac{F[f]}{\beta_0}=
\int \d P\d\bm{x} f(\bm{x},P)\log f(\bm{x},P)+\frac{1}{2}\int\d P \d\bm{x}\d P'\d\bm{x'}
f(\bm{x},P)G(\bm{r},P,P')f(\bm{x'},P'), \label{virial}
\end{align}
where $\bm{r}=\bm{x'}-\bm{x}$ is the relative position of two molecules. 
The energy is measured by $\beta_0$, the product of 
the Boltzmann constant and the temperature. 
The number density $f$ is a function of the position $\bm{x}$ and the orientation $P$. 
We define $c(\bm{x})=\int \d P f(\bm{x},P)$ as the spatial concentration, and 
${\rho}(\bm{x},P)=f(\bm{x},P)/c(\bm{x})$ as the orientational density. 
They satisfy 
$$
\int \d\bm{x}\d P f(\bm{x},P)=\int\d\bm{x}c(\bm{x})\int\d P\rho(\bm{x},P)=c_0V,
$$
where $V$ is the volume of the system, and $c_0$ is the average concentration. 
The kernel $G(\bm{r},P,P')$ is the Mayer function $G=1-\exp(-U/\beta_0)$ 
\cite{Mayerbook} about the pairwise potential $U$. 
In the case of hard-core potential, if two molecules touch, then 
$U(\bm{r},P,P')=+\infty$, leading to $G(\bm{r},P,P')=1$; 
otherwise $U(\bm{r},P,P')=0$, namely $G(\bm{r},P,P')=0$. 

To derive the form of the tensor model, 
we expand the pairwise interaction term in \eqref{virial} about $\bm{r}$ and $P$. 
After the expansion, we are able to express the pairwise interaction term by the three tensors identified in \cite{SymmO}. 
Then, we minimize the entropy term with the value of these tensors fixed, so that it is also expressed as a functional of the three tensors. 
This approach has also been adopted for rod-like molecules \cite{ball2010nematic,katriel1986free,RodModel}, where the density function becomes the Bingham distribution. 

\subsubsection{Spatial and orientational expansion}
First, we do Taylor expansion on $f(\bm{x'},P')=f(\bm{x}+\bm{r},P')$ with respect to $\bm{r}$, yielding 
\begin{align}
\frac{F[f]}{\beta_0}=&
  \int \d P\d\bm{x} f(\bm{x},P)\log f(\bm{x},P)\nonumber\\
  &+\sum_{k\ge 0}\frac{1}{2k!}\int\d\bm{x}\d P\d P'
    f(\bm{x},P)M^{(k)}(P,P')\nabla^k f(\bm{x},P'),\label{TlExp}
\end{align}
where 
\begin{equation}
  M^{(k)}(P,P')=\int G(\bm{r},P,P')
  \underbrace{\bm{r}\ldots\bm{r}}_{k\ \mbox{\small{times}}}\d\bm{r}, \label{spcmoment}
\end{equation}
a $k$th-order symmetric tensor, is the $k$th moment of $G$. 
For the hard-core interaction, the integration is taken on the region where $G=1$. 
By determining this region, we are able to calculate $M^{(k)}$ numerically. 
The detail is described in Appendix. 
Because the size of the region is proportional to $l^3$, 
we have $M^{(k)}\propto l^{k+3}$. 

Next, we expand $M^{(k)}(P,P')$ with respect to $P$ and $P'$. To clearly present the idea, we briefly review the expansion of $M^{(k)}$ for rod-like molecules discussed in \cite{RodModel}. 
In particular, we only look at $M^{(0)}$ and $M^{(2)}$ because they are sufficient for nematic phases (note that $M^{(1)}=0$). 
In this case, $M^{(k)}=M^{(k)}(\bm{m},\bm{m'})$ where $\bm{m}$ and $\bm{m'}$ are the directors of two rods (or, in the context of the current work, we may let $\bm{m}=\bm{m}_1(P)$ and $\bm{m'}=\bm{m}_1(P')$; see Theorem 3.4 in \cite{SymmO}). 
Analytical calculations give
\begin{align}
  M^{(0)}(\bm{m},\bm{m'})&=M^{(0)}(\eta), \label{rod0}\\
  M^{(2)}(\bm{m},\bm{m'})&=B_1(\eta)I+B_2(\eta)(\bm{m}\bm{m}+\bm{m'}\bm{m'})+B_3(\eta)(\bm{m}\bm{m'}+\bm{m'}\bm{m}), \label{rod2}
\end{align}
where $\eta=\bm{m}\cdot\bm{m'}$ is the inner product of the two directors, and $I$ is the identity matrix. 
Then, $M^{(0)}(\eta)$ and $B_i(\eta)$ are expanded as polynomials of $\eta$. 
In the resulting approximation formulas, $M^{(0)}$ and $M^{(2)}$ are expressed as polynomials of $\bm{m}$ and $\bm{m'}$. 
In this way, the variables $\bm{m}$ and $\bm{m'}$ are separated, leading to the approximate free energy as a function of tensors. 
An important point to be noted is the truncation of $M^{(0)}(\eta)$ and $B_i(\eta)$. 
The truncation is according to the order of each of $\bm{m}$ and $\bm{m'}$. 
Specifically, $M^{(0)}$ and $B_1$ are truncated at fourth order, $B_2$ at second order, and $B_3$ at third order. 
This is because $\bm{m}\bm{m}+\bm{m'}\bm{m'}$ contribute to the order by two, and $\bm{m}\bm{m'}+\bm{m'}\bm{m}$ contribute to the order by one. 
By this truncation, the approximation formulas include all the terms such that each of $\bm{m}$ and $\bm{m'}$ is not larger than fourth order, respectively. 
This truncation is adopted because the order of $\bm{m}$ and $\bm{m'}$ determines the order of tensor in the free energy. 
Under the above truncation, the corresponding free energy is a function including all the allowed terms of tensors up to fourth order. 

Returning to bent-core molecules, we aim to approximate each component of $M^{(k)}$ as a polynomial of $\bm{m}_{j}(P)$ and $\bm{m'}_{j}=\bm{m}_{j}(P')$. 
Similar to rod-like molecules \cite{RodModel}, we will only consider $k=0,1,2$, because we only examine nematic phases. 
Since we cannot do analytical calculations, we will follow the procedure in \cite{SymmO} with some extensions to determine the form of approximation formula by symmetric properties. 
The case $k=0$ has been discussed previously and will be reviewed shortly. 
For $k\ge 1$, we will first write down expressions similar to \eqref{rod2}, followed by polynomial approximations. 
For the truncation of the polynomials, we only retain the terms such that the degrees of $\bm{m}_{j}$ and $\bm{m}_{j}'$ are no more than second order, respectively. 
This makes the free energy as a function of tensors up to second order. 
The choice is largely based on keeping the model concise. 
One can also choose to truncate at fourth order like what is done for rod-like molecules, but at the expense of having over $100$ terms in the free energy. 

Denote the relative orientation and its components as 
\begin{equation}
\bar{P}=P^{-1}P'=(p_{ij})_{3\times 3}=(\bm{m}_i\cdot\bm{m'}_j)_{3\times 3}, \quad i,j=1,2,3, \label{relP}
\end{equation}
where we denote $\bm{m'}_{i}=\bm{m}_i(P')$. 
The following equalities shall be satisfied for molecules with the symmetry 
plane $\hat{O}\bm{m}_1\bm{m}_2$, 
\begin{align}
  G(T\bm{r},TP,TP')&=G(\bm{r},P,P'), \quad \forall T\in SO(3), \label{rotate0}\\
  G(-\bm{r},P',P)&=G(\bm{r},P,P'), \label{switch00}\\
  G(-\bm{r},PJ,P'J)&=G(\bm{r},P,P')\mbox{ for }J=\mbox{diag}(-1,-1,1). 
  \label{achiral}
\end{align}
The above equalities have been stated in \cite{SymmO}. 
The meaning of the three equalities is that $G$ is invariant 
when two molecules rotate together, when two molecules are switched, and when 
one molecule is reflected about the plane $\hat{O}\bm{m}_1\bm{m}_2$ of 
the other molecule. 

As the simplest case, we review the key points in the expansion of $M^{(0)}$. 
By setting $T=P^{-1}$ in (\ref{rotate0}), we can see that $M^{(0)}(P,P')=M^{(0)}(I,P^{-1}P')$. 
Thus, $M^{(0)}$ is a function of the relative orientation $\bar{P}$. 
Then from (\ref{achiral}), we deduce that 
\begin{equation}
M^{(0)}(\bar{P})=M^{(0)}(J\bar{P}J). \label{symplane0}
\end{equation}
Note that $\bar{P}$ and $J\bar{P}J$ are the only two elements in 
$SO(3)$ when $(p_{11},p_{12},p_{21},p_{22})$ is fixed. 
Hence $M^{(0)}$ is reduced to a function of the above four scalars. 
By (\ref{switch00}), we have $M^{(0)}(\bar{P})=M^{(0)}(\bar{P}^T)$, leading to 
\begin{equation}
  M^{(0)}(p_{11},p_{12},p_{21},p_{22})=M^{(0)}(p_{11},p_{21},p_{12},p_{22}). 
  \label{switch0}
\end{equation}
We use a polynomial of $p_{11},\ p_{12},\ p_{21},\ p_{22}$ to approximate $M^{(0)}$, denoted by 
$\hat{M}^{(0)}$. It shall satisfy \eqref{switch0} as well. 
Furthermore, it has the $\bm{m}_2\to -\bm{m}_2$ and 
$\bm{m'}_2\to -\bm{m'}_2$ symmetries. Since $p_{ij}=\bm{m}_i\cdot\bm{m'}_j$, 
only the terms where both $\bm{m}_2$ and $\bm{m'}_2$ appear even times 
can be retained. For example, the term $p_{22}=\bm{m}_2\cdot\bm{m'}_2$ will be 
discarded since both $\bm{m}_2$ and $\bm{m'}_2$ appear one time. 
Thus, we obtain the following quadratic approximation 
\begin{equation}\label{quadapp}
\hat{M}^{(0)}=c_{00}+c_{01}p_{11}+c_{02}p_{11}^2+c_{03}p_{22}^2+c_{04}(p_{12}^2+p_{21}^2). 
\end{equation}
The first index of the coefficients $c_{0j}$ is zero, 
corresponding to the zeroth moment $M^{(0)}$. 
These coefficients are independent of $P$ and $P'$. 

When we apply the above procedure to the expansion of $M^{(k)}$ for $k\ge 1$, 
some modifications need to be made since $M^{(k)}$ is a $k$th-order tensor. 
We will first seek a representation similar to \eqref{rod2}. 
The representation \eqref{rod2} conveys two messages: 
\begin{enumerate}
\item $M^{(2)}$ can be expressed as linear combination of some tensors generated by $\bm{m}$ and $\bm{m'}$ and $I$. Moreover, the expression is symmetric about $\bm{m}$ and $\bm{m'}$. 
\item The coefficients $B_i$ depend only on $\bm{m}\cdot\bm{m'}$ that describes the relative orientation of $\bm{m}$ and $\bm{m'}$. 
\end{enumerate}
Actually, these two statements hold for any $M^{(k)}$ (see \cite{RodModel} (3.22) for the fourth moment $M^{(4)}$). 
With this observation, for bent-core molecules, we first seek for a representation of $M^{(k)}$ by linear combination of tensors generated by $\bm{m}_i(P)$, $\bm{m'}_i=\bm{m}_i(P')$ and $I$, which is symmetric about $P$ and $P'$. 
Then, we figure out the symmetry of the coefficients in this representation, followed by polynomial approximation. 
We pay particular attention to the effect of the symmetry plane $\hat{O}\bm{m}_1\bm{m}_2$. It eliminates the appearance of $\bm{m}_3$ and $\bm{m'}_3$ in $M^{(0)}$ by \eqref{symplane0}. We can also eliminate them in any $M^{(k)}$, as we will show below. 
We will only discuss $M^{(1)}$ in detail, because $M^{(2)}$ follows the same way. 

Now we start to discuss $M^{(1)}$. 

\textbf{Step 1.} We show that for fixed $(P,P')$, $M^{(1)}$ can be expressed as 
\begin{equation}
  M^{(1)}(P,P')=\tilde{c}_1\bm{m}_1+\tilde{c}_2\bm{m}_2
  +\tilde{c}_{1'}\bm{m'}_1+\tilde{c}_{2'}\bm{m'}_2, 
  \label{firstM_1}
\end{equation}
where $\tilde{c}_j\,(j=1,2,1',2')$ are functions of $(P,P')$. 

We begin with writing 
\begin{equation}
  M^{(1)}(P,P')=\tilde{c}_1\bm{m}_1+\tilde{c}_2\bm{m}_2+\tilde{c}_3\bm{m}_3. 
  \label{firstM_0}
\end{equation}
To write down a representation symmetric about $P$ and $P'$, 
we express $\bm{m}_3$ by linear combination of $\bm{m}_1$, $\bm{m}_2$, $\bm{m'}_1$ and $\bm{m'}_2$. 
Note that this cannot be done when $\bm{m}_3=\pm \bm{m'}_3$, because in this case $\text{span}\{\bm{m}_1,\bm{m}_2,\bm{m'}_1,\bm{m'}_2\}=\text{span}\{\bm{m}_1,\bm{m}_2\}$, and $\bm{m}_3\notin \text{span}\{\bm{m}_1,\bm{m}_2\}$. 
However, if the molecule has the symmetry plane $\hat{O}\bm{m}_1\bm{m}_2$, we show that $\tilde{c_3}=0$ when $\bm{m}_3=\pm \bm{m'}_3$. 
Actually, the condition $\bm{m}_3=\pm \bm{m'}_3$ can be rewritten as $\bar{P}=P^{-1}P'=\mbox{diag}(W,\pm 1)$ where $W$ is a $2\times 2$ orthogonal matrix. 
Thus, we have $J\bar{P}J=\bar{P}$ with $J=\mbox{diag}(-1,-1,1)$. 
Together with \eqref{rotate0} and (\ref{achiral}), we have 
\begin{align}
  G(\bm{r},P,P')=&G(P^{-1}\bm{r},I,\bar{P})=G(-P^{-1}\bm{r},J,\bar{P}J)\nonumber\\
  =&G(-JP^{-1}\bm{r},I,J\bar{P}J)=G(-JP^{-1}\bm{r},I,\bar{P})=G(-PJP^{-1}\bm{r},P,P'). \label{symplane}
\end{align}
Note that $-PJP^{-1}\bm{r}=\bm{r}-2(\bm{r}\cdot\bm{m}_3)\bm{m}_3$. 
Taking \eqref{symplane} into \eqref{spcmoment}, we obtain 
\begin{align}
  2\tilde{c}_3=&2\bm{m}_3\cdot M^{(1)}(P,P')=2\int G(\bm{r},P,P')(\bm{r}\cdot\bm{m}_3)\d\bm{r}\nonumber\\
  =&\int \big(G(\bm{r},P,P')+G(-PJP^{-1}\bm{r},P,P')\big)(\bm{r}\cdot\bm{m}_3)\d\bm{r}\nonumber\\
  =&\int G(\bm{r},P,P')(\bm{r}-PJP^{-1}\bm{r})\cdot\bm{m}_3\d\bm{r}\nonumber\\
  =&0. \label{symplane1}
\end{align}
Therefore, when the molecule has the symmetry plane $\hat{O}\bm{m}_1\bm{m}_2$, we are allowed to use the representation \eqref{firstM_1}. 

\textit{Remark.} Whether the molecular has a symmetry plane affects the form of the representation of $M^{(k)}$. 
If the molecule is chiral, we have to include $\bm{m}_3$. It is also the case for rod-like molecules. 
Actually, we have $M^{(1)}\ne 0$ even if a chiral rod-like molecule has the head-to-tail symmetry. 

\textbf{Step 2.}
We analyze the symmetric properties of the scalars $\tilde{c}_j$ in \eqref{firstM_1}. 
Apparently, the representation \eqref{firstM_1} is not unique. 
In what follows, when we say that $\tilde{c}_j$ satisfy certain symmetries, it means that there exists a representation in which the symmetries hold. 

First, we can require that they are functions of $\bar{P}$. 
We deduce from (\ref{rotate0}) that for any $T\in SO(3)$, 
\begin{align}
M^{(1)}(TP,TP')=&\int \bm{r}G(\bm{r},TP,TP')\d\bm{r}=\int \bm{r}G(T^{-1}\bm{r},P,P')\d\bm{r}\nonumber\\
=&\int (T\bm{r})G(\bm{r},P,P')\d\bm{r}=TM^{(1)}(P,P'). \nonumber
\end{align}
It implies that if we already know the value of $\tilde{c}_j(I,P^{-1}P)$, we may let $\tilde{c}_j(P,P')=\tilde{c}_j(I,P^{-1}P')$ in \eqref{firstM_1} to obtain a representation. From now on, we will omit this kind of explanations and just write 
\begin{equation}
\tilde{c}_{j}(TP,TP')=\tilde{c}_{j}(P,P')=\tilde{c}_{j}(I,P^{-1}P')\triangleq\tilde{c}_{j}(\bar{P}),\ j=1,2,1',2'. \label{rotate1}
\end{equation}
Next, we substitute $(P,P')$ with $(PJ,P'J)$ in \eqref{firstM_1}. 
Using (\ref{achiral}), we obtain 
\begin{align*}
  &M^{(1)}(PJ,P'J)=-M^{(1)}(P,P')\\
  =&\tilde{c}_1(J\bar{P}J)(-\bm{m}_1)+\tilde{c}_2(J\bar{P}J)(-\bm{m}_2)
  +\tilde{c}_{1'}(J\bar{P}J)(-\bm{m'}_1)+\tilde{c}_{2'}(J\bar{P}J)(-\bm{m'}_2) \\
  =&-\tilde{c}_1(\bar{P})\bm{m}_1-\tilde{c}_2(\bar{P})\bm{m}_2
  -\tilde{c}_{1'}(\bar{P})\bm{m'}_1-\tilde{c}_{2'}(\bar{P})\bm{m'}_2, 
\end{align*}
yielding (cf. \eqref{symplane0})
\begin{equation}
\tilde{c}_{j}(\bar{P})=\tilde{c}_{j}(J\bar{P}J)\triangleq\tilde{c}_{j}(p_{11},p_{12},p_{21},p_{22}),\ j=1,2,1',2'. \label{achiral1}
\end{equation}
At this point, we have eliminated the appearance of $\bm{m}_3$ and $\bm{m'}_3$ in \eqref{firstM_1}. 
Then, we switch $P$ and $P'$ in \eqref{firstM_1}. By (\ref{switch00}), we have 
\begin{align*}
  &M^{(1)}(P',P)=-M^{(1)}(P,P')\\
  =&\tilde{c}_1(\bar{P}^T)\bm{m'}_1+\tilde{c}_2(\bar{P}^T)\bm{m'}_2
  +\tilde{c}_{1'}(\bar{P}^T)\bm{m}_1+\tilde{c}_{2'}(\bar{P}^T)\bm{m}_2 \\
  =&-\tilde{c}_1(\bar{P})\bm{m}_1-\tilde{c}_2(\bar{P})\bm{m}_2
  -\tilde{c}_{1'}(\bar{P})\bm{m'}_1-\tilde{c}_{2'}(\bar{P})\bm{m'}_2. 
\end{align*}
Thus, we have $\tilde{c}_1(\bar{P})=-\tilde{c}_{1'}(\bar{P}^T)$ and $\tilde{c}_2(\bar{P})=-\tilde{c}_{2'}(\bar{P}^T)$, leading to (cf. \eqref{switch0})
\begin{align}
  &\tilde{c}_{1}(p_{11},p_{12},p_{21},p_{22})=-\tilde{c}_{1'}(p_{11},p_{21},p_{12},p_{22}), \nonumber\\
  &\tilde{c}_{2}(p_{11},p_{12},p_{21},p_{22})=-\tilde{c}_{2'}(p_{11},p_{21},p_{12},p_{22}). \label{switch1}
\end{align}

\textbf{Step 3.}
With the symmetric properties \eqref{switch1}, we write down the polynomial approximation of $\tilde{c}_{j}$ with attention to the 
$\bm{m}_2\to -\bm{m}_2$ and $\bm{m'}_2\to -\bm{m'}_2$ symmetries. 
The degree of polynomial is chosen such that both $\bm{m}_i$ and $\bm{m'}_i$ 
are truncated at second order in \eqref{firstM_1}. 
For example, the term $p_{21}\bm{m}_2$ can be rewritten as 
$(\bm{m}_2\cdot\bm{m}_1')\bm{m}_2$, in which the order of $\bm{m}_2$ is two. 
As an example, we look into $\tilde{c}_{2}$. 
The $\bm{m}_2\to -\bm{m}_2$ symmetry allows only one term $ap_{21}$ in the polynomial approximation, where $a$ is the coefficient. 
Similarly, in the polynomial approximation of $\tilde{c}_{2'}$, there is also only one term $a'p_{12}$. 
Then we use \eqref{switch1} to arrive at $a=-a'$. 
In this way, the polynomial approximations are written as 
\begin{align}
  \tilde{c}_1=-c_{10}-c_{11}p_{11}, \quad
  \tilde{c}_{1'}=c_{10}+c_{11}p_{11}, \quad
  \tilde{c}_{2}=-c_{12}p_{21}, \quad
  \tilde{c}_{2'}=c_{12}p_{12}, \label{poly1}
\end{align}
where we denote the coefficients by $c_{1j}$. 
The first index of $c_{1j}$ becomes one to indicate that they come from $M^{(1)}$. 
The coefficients $c_{1j}$ are independent of $P$ and $P'$. 

The expansion of $M^{(2)}$ follows the same way as $M^{(1)}$ and is described briefly. 
We start from 
\begin{align}
M^{(2)}(P,P')=\sum_{l_1,l_2=1,2,3}\tilde{c}_{l_1l_2}\bm{m}_{l_1}\bm{m}_{l_2}. 
\nonumber
\end{align}
Then, we express $\bm{m}_3$ by linear combination of $\bm{m}_1$, $\bm{m}_2$, $\bm{m'}_1$ and $\bm{m'}_2$ if $\bm{m'}_3\ne\pm\bm{m}_3$. In the case $\bm{m'}_3=\pm\bm{m}_3$, we use \eqref{symplane} to obtain $\tilde{c}_{13}=\tilde{c}_{23}=\tilde{c}_{31}=\tilde{c}_{32}=0$ (cf. \eqref{symplane1}), and utilize the equality $\bm{m}_3\bm{m}_3=I-\bm{m}_1\bm{m}_1-\bm{m}_2\bm{m}_2$ to take care of the term $\bm{m}_3\bm{m}_3$. 
In any of the above two cases, we are allowed to use the following representation that is symmetric about $P$ and $P'$, 
\begin{align}
M^{(2)}(P,P')=&\tilde{c}_{00'}I
+\sum_{l_1,l_2=1,2}\tilde{c}_{l_1l_2}\bm{m}_{l_1}\bm{m}_{l_2}
+\sum_{l'_1,l'_2=1',2'}\tilde{c}_{l'_1l'_2}\bm{m'}_{l_1}\bm{m'}_{l_2}\nonumber\\
&+\sum_{l=1,2,l'=1',2'}\tilde{c}_{ll'}(\bm{m}_{l}\bm{m'}_{l'}+\bm{m'}_{l'}\bm{m}_{l}), 
\label{sep2}
\end{align}
where we require
\begin{equation}
\tilde{c}_{l_1l_2}=\tilde{c}_{l_2l_1},~\tilde{c}_{l'_1l'_2}=\tilde{c}_{l'_2l'_1} \nonumber
\end{equation}
because $M^{(2)}$ is symmetric. 
Repeating the derivation of \eqref{rotate1} and \eqref{achiral1} for $M^{(1)}$, we can deduce that $\tilde{c}_{j_1j_2}$ are functions of $(p_{11},p_{12},p_{21},p_{22})$. 
Then, by switching $P$ and $P'$ in \eqref{sep2} and using (\ref{switch00}), we obtain (cf. \eqref{switch1})
\begin{align}
  \tilde{c}_{00'}(p_{11},p_{12},p_{21},p_{22})
  &=\tilde{c}_{00'}(p_{11},p_{21},p_{12},p_{22}), \nonumber\\
  \tilde{c}_{11}(p_{11},p_{12},p_{21},p_{22})
  &=\tilde{c}_{1'1'}(p_{11},p_{21},p_{12},p_{22}), \nonumber\\
  \tilde{c}_{12}(p_{11},p_{12},p_{21},p_{22})
  &=\tilde{c}_{1'2'}(p_{11},p_{21},p_{12},p_{22}), \nonumber\\
  \tilde{c}_{22}(p_{11},p_{12},p_{21},p_{22})
  &=\tilde{c}_{2'2'}(p_{11},p_{21},p_{12},p_{22}), \nonumber\\
  \tilde{c}_{12'}(p_{11},p_{12},p_{21},p_{22})
  &=\tilde{c}_{21'}(p_{11},p_{21},p_{12},p_{22}). \label{switch2}
\end{align}
By noting the $\bm{m}_2\to -\bm{m}_2$ and $\bm{m'}_2\to -\bm{m'}_2$ symmetries, 
and keeping the truncation at second order for both $\bm{m}_i$ and $\bm{m'}_i$ in \eqref{sep2}, 
we obtain the polynomial approximations of $\tilde{c}_{j_1j_2}$, 
\begin{align}
  \tilde{c}_{00'}&=-c_{20}-c_{21}p_{11}-c_{22}p_{11}^2-c_{23}p_{22}^2
    -c_{24}(p_{12}^2+p_{21}^2), \nonumber\\
  \tilde{c}_{11}&=\tilde{c}_{1'1'}=-c_{25}, \nonumber\\
  \tilde{c}_{22}&=\tilde{c}_{2'2'}=-c_{26}, \nonumber\\
  \tilde{c}_{11'}&=-c_{27}-c_{28}p_{11}, \nonumber\\
  \tilde{c}_{22'}&=-c_{29}p_{22}, \nonumber\\
  \tilde{c}_{12}&=\tilde{c}_{21}=\tilde{c}_{1'2'}=\tilde{c}_{2'1'}=0, \nonumber\\
  \tilde{c}_{12'}&=-c_{2,10}p_{12}, \quad
  \tilde{c}_{21'}=-c_{2,10}p_{21}. \label{poly2}
\end{align}
Just as the notation for $M^{(0)}$ and $M^{(1)}$, the first index of $c_{2j}$ is two. 
Again the coefficients $c_{2j}$ do not depend on $P$ and $P'$. 

Summarizing \eqref{quadapp}, \eqref{firstM_1}, \eqref{poly1}, \eqref{sep2}, and \eqref{poly2}, we obtain the expansion of $M^{(0)}$, $M^{(1)}$, $M^{(2)}$, denoted by $\hat{M}^{(0)}$, $\hat{M}^{(1)}$, $\hat{M}^{(2)}$, 
\begin{align}
  \hat{M}&^{(0)}=c_{00}+c_{01}p_{11}+c_{02}p_{11}^2+c_{03}p_{22}^2+c_{04}(p_{12}^2+p_{21}^2),
  \nonumber\\
  \hat{M}&^{(1)}=(-c_{10}-c_{11}p_{11})(\bm{m}_1-\bm{m'}_1)
  -c_{12}(p_{21}\bm{m}_2-p_{12}\bm{m'}_2),\nonumber\\
  \hat{M}&^{(2)}=-\big(c_{20}+c_{21}p_{11}+c_{22}p_{11}^2
  +c_{23}p_{22}^2+c_{24}(p_{12}^2+p_{21}^2)\big)I\nonumber\\
  &-c_{25}(\bm{m}_1\bm{m}_1+\bm{m'}_1\bm{m'}_1)
  -c_{26}(\bm{m}_2\bm{m}_2+\bm{m'}_2\bm{m'}_2)\nonumber\\
  &-(c_{27}+c_{28}p_{11})(\bm{m}_1\bm{m'}_1+\bm{m'}_1\bm{m}_1)\nonumber\\
  &-c_{29}p_{22}(\bm{m}_2\bm{m'}_2+\bm{m'}_2\bm{m}_2)\nonumber\\
  &-c_{2,10}\left[p_{12}(\bm{m}_1\bm{m'}_2+\bm{m'}_2\bm{m}_1)
  +p_{21}(\bm{m}_2\bm{m'}_1+\bm{m'}_1\bm{m}_2)\right]. \label{moments}
\end{align}

We substitute $M^{(k)}$ with $\hat{M}^{(k)}$ in (\ref{TlExp}). 
The purpose is to separate the the variables $P$ and $P'$. 
In this way, each term in $\hat{M}^{(k)}$ corresponds to a term in the free energy. Moreover, each term can be expressed by three tensors $\bm{p}$, $Q_1$, $Q_2$, define as 
\begin{equation}
\bm{p}=\left<\bm{m}_1\right>,\ Q_1=\left<\bm{m}_1\bm{m}_1\right>,\ 
Q_2=\left<\bm{m}_2\bm{m}_2\right>, \label{tensors}
\end{equation}
where $\left<\cdot\right>=\int\d P(\cdot)\rho(P)$ denotes the average about the orientational density $\rho$. 
As an example, the term $-p^2_{12}I$ in $\hat{M}^{(2)}$ generates the term $\nabla(cQ_1):\nabla(cQ_2)$, 
\begin{align}
  &\int\d\bm{x}\d P\d P'\, -p^2_{12}I: f(\bm{x},P)
  \nabla^2 f(\bm{x},P')\nonumber\\
  =&-\int\d\bm{x}\left(c(\bm{x})\int\d Pm_{1i}m_{1j}{\rho}(\bm{x},P)\right)
  \partial_{kk}\left(c(\bm{x})\int\d P'm'_{2i}m'_{2j}{\rho}(\bm{x},P')\right)\nonumber\\
  =&-\int\d\bm{x}\left(c(\bm{x})\left<m_{1i}m_{1j}\right>\right)
  \partial_{kk}\left(c(\bm{x})\left<m_{2i}m_{2j}\right>\right), \nonumber\\
  =&\int\d\bm{x}\partial_k\left(c(\bm{x})Q_{1ij}\right)
  \partial_{k}\left(c(\bm{x})Q_{2ij}\right). \label{Septerm}
\end{align}
Here we have done integration by parts and assume that the boundary terms vanish. 
We will write down all the terms afterwards in \eqref{FreeEng}. 
One shall also observe that the three tensors are all the nontrivial tensors about $\bm{m}_1$ and $\bm{m}_2$ up to second order. 

Finally, we point out that the derivation described in this section is applicable to any $M^{(k)}$, which is necessary if we aim to model smectic and columnar phases. 

\subsubsection{The entropy term}
By the expansion discussed above, we have defined three tensors as order parameters. 
Now we can express the entropy term as a function of these tensors by a constrained minimization problem (cf. \cite{ball2010nematic}). 
The entropy term can be rewritten as 
$$
\int\d\bm{x}\d P\,c\rho(\log c+\log\rho)=
\int\d\bm{x}c\log c+\int\d\bm{x}\left(c(\bm{x})\int\d P \rho\log\rho\right). 
$$
We minimize $\int\d P \rho\log \rho$ with the values of $\bm{p}$, $Q_1$ and $Q_2$ fixed. The Euler-Lagrange equation is written as 
\begin{equation}
  1+\log\rho=\lambda+\bm{b}\cdot\bm{m}_1+B_1:\bm{m}_1\bm{m}_1+B_2:\bm{m}_2\bm{m}_2, 
\end{equation}
where the Lagrange multipliers are chosen such that 
\begin{align}
    \int \d P \rho(P)=1,\  \int \d P\bm{m}_1\rho(P)=\bm{p},\ 
    \int \d P\bm{m}_1\bm{m}_1\rho(P)=Q_1,\
    \int \d P\bm{m}_2\bm{m}_2 \rho(P)=Q_2.
\end{align}
The Euler-Lagrange equation gives the Boltzmann distribution, 
\begin{equation}
\rho=\frac{1}{Z}\exp(\bm{b}\cdot\bm{m}_1+B_1:\bm{m}_1\bm{m}_1+B_2:\bm{m}_2\bm{m}_2),
\label{distribution}
\end{equation}
where $Z$ is the normalization factor, 
\begin{equation}
Z=\int\d P\exp(\bm{b}\cdot\bm{m}_1+B_1:\bm{m}_1\bm{m}_1+B_2:\bm{m}_2\bm{m}_2).
\label{PartFun}
\end{equation}

We require that $Q_1$ and $Q_2$ share an eigenframe and that $\bm{p}$ is their eigenvector. 
This approximation comes from a theoretical result for homogeneous phases \cite{XuCMS}. 
In other words, we assume that there exists a $T=(\bm{n}_1,\bm{n}_2,\bm{n}_3)\in SO(3)$ such that 
\begin{align}
\bm{p}=s\bm{n}_1, \
Q_1=q_{11}\bm{n}_1\bm{n}_1+q_{12}\bm{n}_2\bm{n}_2+q_{13}\bm{n}_3\bm{n}_3,\ 
Q_2=q_{21}\bm{n}_1\bm{n}_1+q_{22}\bm{n}_2\bm{n}_2+q_{23}\bm{n}_3\bm{n}_3, \label{codiag}
\end{align}
with $q_{i3}=1-q_{i1}-q_{i2}$. The eigenvalues shall satisfy 
\begin{align}
  &q_{ij}>0, s^2<q_{11}, \nonumber\\
  &q_{11}+q_{12},q_{11}+q_{21},q_{12}+q_{22},q_{21}+q_{22}<1,\nonumber\\
  &q_{11}+q_{12}+q_{21}+q_{22}>1. \label{range}
\end{align}
They originate from $Q_1-\bm{p}\bm{p}$, $Q_2$, $I-Q_1-Q_2$ are positive definite.
They originate from $Q_1-\bm{p}\bm{p}$, $Q_2$, $I-Q_1-Q_2$ are positive definite.
Furthermore, if $(s,q_{ij})$ lies in a subregion given by the above constraints, 
there exists a unique $(\bm{b},B_1,B_2)$ of the form 
\begin{align}
\bm{b}&=T(b_1,0,0)^T, 
B_1=T\mbox{diag}(b_{11},b_{12},0)T^T,
B_2=T\mbox{diag}(b_{21},b_{22},0)T^T,\label{BRep}
\end{align}
such that the moments of the corresponding Boltzmann distribution are $(\bm{p},Q_1,Q_2)$. 
We state and prove the result rigorously in Appendix (cf. Theorem \ref{clos_exist}). 

\subsubsection{The free energy}
Because we focus on nematic phases, we assume that $c(\bm{x})=c_0$ is constant, still denoted by $c$. 
By \eqref{TlExp}, \eqref{moments}, \eqref{distribution} and integration by parts, the tensor model is written as follows, 
\begin{align}
  \frac{F[\bm{p},Q_1,Q_2]}{\beta_0}=&\int\d\bm{x} 
  \Big\{c(\bm{b}\cdot\bm{p}+B_1:Q_1+B_2:Q_2-\log Z)\nonumber\\
  &+\frac{c^2}{2}(c_{01}|\bm{p}|^2+c_{02}|Q_1|^2+c_{03}|Q_2|^2
  +2c_{04}Q_1:Q_2)\nonumber\\
  &+c^2(c_{11}p_j\partial_iQ_{1ij}
  +c_{12}p_j\partial_iQ_{2ij})\nonumber\\
  &+\frac{c^2}{4}\left[c_{21}|\nabla\bm{p}|^2
  +c_{22} |\nabla Q_1|^2 + c_{23}|\nabla Q_2|^2
  + 2c_{24}\partial_{i}Q_{1jk}\partial_iQ_{2jk}\right.\nonumber\\
  &+ 2c_{27}\partial_{i}p_i\partial_jp_j
  + 2c_{28}\partial_iQ_{1ik}\partial_jQ_{1jk}\nonumber\\
  &+ \left.2c_{29}\partial_iQ_{2ik}\partial_jQ_{2jk}
  + 4c_{2,10}\partial_iQ_{1ik}\partial_jQ_{2jk}\right]\Big\}, \label{FreeEng}
\end{align}
where the components of $\bm{p}$ and $Q_k$ are denoted as $p_i$ and $Q_{kij}$. 
The first line comes from the entropy term. The second line comes from $\hat{M}^{(0)}$. The third line comes from $\hat{M}^{(1)}$, referred to as first-order elastic energy. They are crucial for modulated nematic phases to emerge. 
The rest terms come from $\hat{M}^{(2)}$, referred to as second-order elastic energy. 

\subsection{The coefficients}
Now we describe how to calculate the coefficients in \eqref{FreeEng}. 
We emphasize that the coefficients in the free energy are just those in $\hat{M}^{(k)}$. 
Note that $\hat{M}^{(k)}$ is the approximation of $M^{(k)}$ that is determined by molecular parameters. 
Hence we minimize the distance between $\hat{M}^{(k)}$ and $M^{(k)}$, defined as 
\begin{align}
\int_{SO(3)}\d P \d P'||M^{(k)}(P,P';l,D,\theta)-\hat{M}^{(k)}(P,P';\{c_{kj}\})||_F^2,\label{coe_ls}
\end{align}
where $||\cdot||_F$ is the Frobenius norm
$||M||_F^2=\sum_{i_1\ldots i_k} |M_{i_1\ldots i_k}|^2$. 
By solving this linear least-square problem, we can express $c_{kj}$ as 
functions of the molecular parameters $l$, $D$ and $\theta$. 
Furthermore, we have $c_{kj}\propto l^{k+3}$ because $M^{(k)}$ has the same scaling. 
Therefore, we can further nondimensionalize the model by the substitution 
$
\bar{\bm{x}}={\bm{x}}/{l},\ \bar{c}=cl^3,\ \bar{c}_{kj}={c_{kj}}/{l^{k+3}}.
$
Now $\bar{c}_{kj}$ become functions of two dimensionless parameters 
$\eta=D/l$ and $\theta$. 
For star molecules, $\hat{M}^{(k)}$ also depends on $l_2$, 
thus $\bar{c}_{kj}$ are also functions of $l_2/l$. 
For convenience, we still express these dimensionless quantities by the original notations. 

The second-order elastic energy shall be positive definite to ensure the 
lower-boundedness of the free energy. This can be guaranteed if the following inequalities hold, 
\begin{align}
  &c_{21},c_{22},c_{23},2c_{27}+c_{21},2c_{28}+c_{22},2c_{29}+c_{23}\ge 0,\nonumber\\ 
  &c_{24}^2\le c_{22}c_{23}, \nonumber\\
  &(2c_{2,10}+c_{24})^2\le (2c_{28}+c_{22})(2c_{29}+c_{23}). \label{coercive}
\end{align}
These inequalities can be easily observed after we rewrite the $c_{2j}$ terms for $7\le j\le 10$, for which we explain by the term $c_{27}$. 
First, we observe that $\partial_ip_i\partial_jp_j-\partial_jp_i\partial_ip_j$ is a boundary term. 
Thus, we can substitute $\partial_ip_i\partial_jp_j$ with $\partial_jp_i\partial_ip_j$ by doing integration by parts, and assume that the boundary term vanishes 
by adpoting suitable boundary conditions (for example periodic boundary conditions that will be used later). 
Then, we write 
$$
\partial_ip_{j}\partial_ip_{j}=\partial_ip_{j}\partial_jp_{i}+
\frac{1}{2}|\partial_ip_{j}-\partial_jp_{i}|^2. 
$$
By doing the same thing to the other three terms, the second-order elastic energy becomes 
\begin{align*}
  &\int\d\bm{x}\,\frac{c^2}{4}\big[\frac{1}{2}c_{21}|\partial_ip_j-\partial_jp_i|^2
    +(2c_{27}+c_{21})(\partial_ip_i)^2
  +\frac{1}{2}c_{22}|\partial_iQ_{1jk}-\partial_jQ_{1ik}|^2\\
  &+\frac{1}{2}c_{23}|\partial_iQ_{2jk}-\partial_jQ_{2ik}|^2
  +c_{24}(\partial_iQ_{1jk}-\partial_jQ_{1ik})(\partial_iQ_{2jk}-\partial_j{Q_{2ik}})\\
  &+ (2c_{28}+c_{22})|\partial_iQ_{1ik}|^2
  + (2c_{29}+c_{23})|\partial_iQ_{2ik}|^2
  + 2(2c_{2,10}+c_{24})\partial_iQ_{1ik}\partial_j(Q_{2jk})\big]. 
\end{align*}
Moreover, if $(2c_{2,10}+c_{24})^2< (2c_{28}+c_{22})(2c_{29}+c_{23})$ or $c_{21}>0$, 
it controls the first-order elastic energy. For example, we have 
\begin{align*}
  (2c_{28}+c_{22})|\partial_iQ_{1ik}|^2
  + (2c_{29}+c_{23})|\partial_iQ_{2ik}|^2
  + &2(2c_{2,10}+c_{24})\partial_iQ_{1ik}\partial_j(Q_{2jk})\\
  &-4\partial_ip_j(c_{11}Q_{1ij}+c_{12}Q_{2ij})
  \ge -C|\bm{p}|^2 
\end{align*}
for $C$ large enough, 
and the right-hand side is bounded from below since $|\bm{p}|<1$. 

When $\theta=\pi$, the molecule becomes a rod. 
In this case, all the coefficients involving $\bm{p}$ and $Q_1$ shall be zero, which is verified in our numerical calculation. 
Furthermore, it can be shown with Theorem \ref{clos_exist} that in the entropy term we have $\bm{b}=B_1=0$, for which we omit the detail. 
Thus, the free energy depends only on $Q_2$, written as 
\begin{align}
  \frac{F[Q_2]}{\beta_0}=&\int\d\bm{x} 
  \Big\{c(B_2:Q_2-\log Z)+\frac{c^2}{2}c_{03}|Q_2|^2
  +\frac{c^2}{4}\left(c_{23}|\nabla Q_2|^2
  + 2c_{29}\partial_iQ_{2ik}\partial_jQ_{2jk}\right)\Big\}. \label{FreeEng_rod}
\end{align}
It is a simplified version of the model proposed for rod-like molecules in \cite{RodModel} (see (3.15) in \cite{RodModel}), including only the terms involving the second-order tensor. 
The condition \eqref{coercive} becomes $c_{23},2c_{29}+c_{23}\ge 0$. 
It is weaker than what is proposed in \cite{longa1987extension} for rod-like molecules. 
This is because in \cite{longa1987extension}, the derivatives of the tensors are viewed as independent functions, and the coercivity is assumed pointwise, which is a stronger condition than requiring the free energy to be lower bounded. 

\begin{figure}
\centering
\includegraphics[width=.48\textwidth,keepaspectratio]{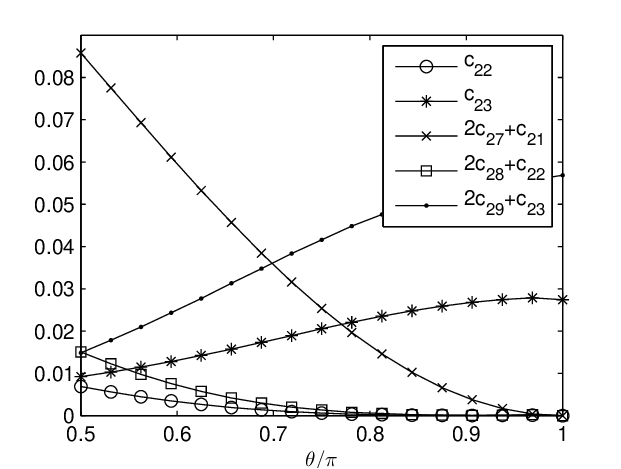}
\includegraphics[width=.48\textwidth,keepaspectratio]{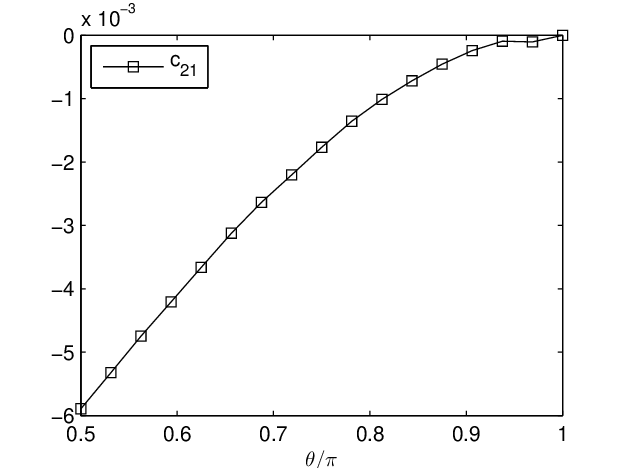}
\caption{\label{coef}The coefficients $c_{2j}$ in the second-order elastic energy for bent-core molecules, measured in the unit $(l/2)^5$, as functions of the bending angle $\theta$ when $\eta=1/40$. }
\end{figure}
Now we examine the coefficients calculated from \eqref{coe_ls}. 
The coefficients of the square terms for bent-core molecules are plotted in Fig. \ref{coef}. 
When $\theta=\pi$, we have $c_{23},2c_{29}+c_{23}>0$, and all the other $c_{2j}$ are zero. 
As the bending angle decreases, $c_{23}$ and $2c_{29}+c_{23}$ monotonely decrease, and the absolute values of other $c_{2j}$ increase monotonely, thus do not change sign. 
We find that all of the inequalities in \eqref{coercive} hold strictly except $c_{21}<0$. 
This is also the case for star molecules. 

The signs of coefficients reflect different modulation mechanism. 
The term $c_{21}|\nabla\bm{p}|^2$ can be stabilized if we truncate up to $M^{(4)}$. 
In fact, we write down the polynomial approximation $\hat{M}^{(4)}$ following the procedure described above, calculate the coefficients by \eqref{coe_ls}, 
and find that the coefficient $c_{41}>0$ for the corresponding term $c_{41}|\nabla^2 \bm{p}|^2$. 
The pair $c_{21}|\nabla \bm{p}|^2+c_{41}|\nabla^2 \bm{p}|^2$ describes the tendency of independent modulation of $\bm{p}$ without coupling to $Q_1$ and $Q_2$. 
We can see this by taking the plane wave $\phi=\exp(i\bm{k}\cdot\bm{x})$ into the energy 
$$
\int\d\bm{x}|\nabla^2\phi|^2+2K|\nabla\phi|^2=(|\bm{k}|^4+2K|\bm{k}|^2)||\phi||^2.
$$
If $K\ge 0$, the preferred frequency is $\bm{k}=0$; 
if $K<0$, the preferred frequency becomes $|\bm{k}|=\sqrt{-K}>0$. 
On the contrary, the quadratic terms about $\nabla Q_1$ and $\nabla Q_2$ 
are positive, indicating that $Q_1$ and $Q_2$ do not tend to show independent modulation, but may show modulation coupled with $\bm{p}$ through the terms $p_i\partial_jQ_{\sigma ij}$. 
Currently, we choose not to include the independent modulation of $\bm{p}$ as an approximation, and discard the term $c_{21}|\nabla\bm{p}|^2$ to avoid lower unboundedness of the energy. 

When we consider molecules with other shapes or interactions and calculate the coefficients from \eqref{coe_ls}, 
we may obtain signs of the coefficients different from the above. 
If this is the case, it indicates that the molecular interaction 
induces different modulation mechanism, 
and we need to do a different truncation in accordance with the mechanism. 

\section{Results and discussion\label{results}}
We examine the phases where inhomogeneity occurs only in the $x$-direction. 
To find modulated phases, we need to minimize the free energy density under the 
periodic boundary condition about the tensors and the period length $L$, 
$$
\min_{\bm{p}(x),Q_1(x),Q_2(x),L}\frac{F[\bm{p}(x),Q_1(x),Q_2(x)]}{L}. 
$$
\subsection{Numerical methods}
We use finite volume method to discretize the free energy. 
Generally speaking, in $[x_k,x_{k+1}]$, a function $g(x)$ is approximated by 
$\frac{1}{2}(g(x_k)+g(x_{k+1}))$, and its derivative is approximated by 
$(g(x_{k+1})-g(x_k))/({x_{k+1}-x_k})$. 
For example, the term 
$$\int_{x_k}^{x_{k+1}}\d x p_i\frac{\d}{\d x}Q_{1,1i}$$ 
is approximated by 
$$
(x_{k+1}-x_{k})\cdot \frac{p_i(x_{k+1})+p_i(x_{k})}{2} \cdot
\frac{Q_{1,1i}(x_{k})-Q_{1,1i}(x_{k+1})}{x_{k+1}-x_k}. 
$$
A single period is discretized using $32$ points. 
The tensors are represented by their eigenvalues and co-owned eigenframe $T(x)$ 
that is represented by the Euler angles $(\alpha(x),\beta(x),\gamma(x))$ 
by (\ref{EulerRep}), 
\begin{align*}
  \bm{p}(x)&=T(x)(s(x),0,0)^T, \\
  Q_1(x)&=T(x)\mbox{diag}(q_{11}(x),q_{12}(x),q_{13}(x))T(x)^T, \\
  Q_2(x)&=T(x)\mbox{diag}(q_{21}(x),q_{22}(x),q_{23}(x))T(x)^T. 
\end{align*}
The eigenvalues are calculated from $(b_1,b_{ij})$ using (\ref{distribution}) and 
(\ref{BRep}). 
We will use $(b_1(x),b_{ij}(x))$ and the Euler angles as the basic variables. 

The derivatives of the free energy about the eigenvalues are given by 
\begin{equation}
\frac{\partial F}{\partial q_{ij}(x)}=b_{ij}(x)+\frac{\partial F_r}{\partial q_{ij}(x)}. \label{GradF}
\end{equation}
Here $F_r$ stands for the part of free energy from the pairwise interaction, 
and the derivatives of the entropy term are calculated by (\ref{average}). 
The derivatives about the Euler angles are given by 
$$
\frac{\partial F}{\partial \alpha(x)}=\frac{\partial F_r}{\partial \alpha(x)}, 
$$
since the entropy term is independent of $T(x)$. 
We use the following stationary point iteration: 
\begin{align}
b_{ij}^{(k+1)}(x)&=b_{ij}^{(k)}(x)-\lambda\frac{\partial F}{\partial q_{ij}^{(k)}(x)}
=(1-\lambda)b_{ij}^{(k)}(x)-\lambda\frac{\partial F_r}{\partial q_{ij}^{(k)}(x)}, \\
\alpha^{(k+1)}(x)&=\alpha^{(k)}(x)-\mu\frac{\partial F_r}{\partial \alpha^{(k)}(x)}. 
\end{align}
The iteration is along a descending direction of the free energy (see \eqref{grad_b}). 

The free energy density may have several local minima. 
Various initial guesses are adopted to obtain as many metastable phases as possible, including but not limited to all the phases presented in the current work. 
Then we compare free energy density of each metastable phase, and label the minimum one as the stable phase. 
In seeking metastable phases, we have also tried with phenomenological coefficients. 
It turns out that many phases can be stable under phenomenological coefficients, but they are found unstable or only metastable under coefficients derived from the hard-core potential. 
In this paper, we only report the stable phases under coefficients derived from the hard-core potential, and leave other metastable phases to a future work. 

\subsection{The phase diagram}
We first list the phases that appear in the phase diagram. 
Define $Q_3=\left<\bm{m}_3\bm{m}_3\right>=I-Q_1-Q_2$ and denote its eigenvalues as $q_{3j}$. 
Because $T$ is the eigenframe shared by $Q_1$ and $Q_2$, it is also the eigenframe of $Q_3$. 
For the phases discussed here, we can do permutation such that $q_{ii}\ge q_{ij}$, and assume this in the following. 
It should be noted that for homogeneous phases, the free energy is independent of the eigenframe $T$. 
\begin{itemize}
\item Isotropic phase ($I$): $s=q_{ij}=0$. 
\item Uniaxial nematic phase ($N_i$): homogeneous with $s=0$, further classified by the relation of eigenvalues. 
In the $N_2$ phase we have 
$q_{22}>1/3>q_{12},q_{32}$ and $q_{j1}=q_{j3}$.
In the $N_3$ phase we have $q_{33}>1/3>q_{13},q_{23}$ and $q_{j1}=q_{j2}$. 
The above relations of eigenvalues indicate that in the $N_i$ phase,
$\bm{m}_i$ aligns near $\pm\bm{n}_i$, and the other two
$\bm{m}_j$ align near the plane perpendicular to $\bm{n}_i$.
\item Biaxial nematic phase ($B$): homogeneous with $q_{ii}>q_{ij}$, 
indicating that $\bm{m}_i$ is preferrably along $\pm\bm{n}_i$. 
\item Twist-bend phase ($N_{tb}$): the eigenvalues $s$ and $q_{ij}$ are constant with $s\ne 0$ and $q_{ii}>q_{ij}$, 
while $T(x)$ shows the modulation 
\begin{equation}
  T(x)=(\bm{n}_1,\bm{n}_2,\bm{n}_3)=\left(
  \begin{array}{ccc}
    0 & -\cos\gamma & \sin\gamma\\
    \cos \frac{\pm 2\pi x}{L} & -\sin\gamma\sin\frac{\pm 2\pi x}{L} 
    & -\cos\gamma\sin\frac{\pm 2\pi x}{L}\\
    \sin \frac{\pm 2\pi x}{L} & \sin\gamma\cos\frac{\pm 2\pi x}{L} 
    & \cos\gamma\cos\frac{\pm 2\pi x}{L}
  \end{array}
  \right), \label{tb}
\end{equation}
where the modulation of $\bm{n}_2$ and $\bm{n}_3$ is identical to the earlier prediction \cite{EuroPhys_56_247}. 
The above equation indicates that $\bm{n}_1$ rotates on a circle, 
and that $\bm{n}_2$ rotates on a conical surface. 
Thus the Euler angle $\gamma$ here becomes the conical angle. 
The sign before $2\pi x$ represents whether $T$ is rotated left- or right-handed. 
The two cases share the same free energy density. 
\end{itemize}
Although we only examine one-dimensional modulated phases, these phases have covered all the phases found experimentally so far. 

\begin{figure}
\centering
\includegraphics[width=.48\textwidth,keepaspectratio]{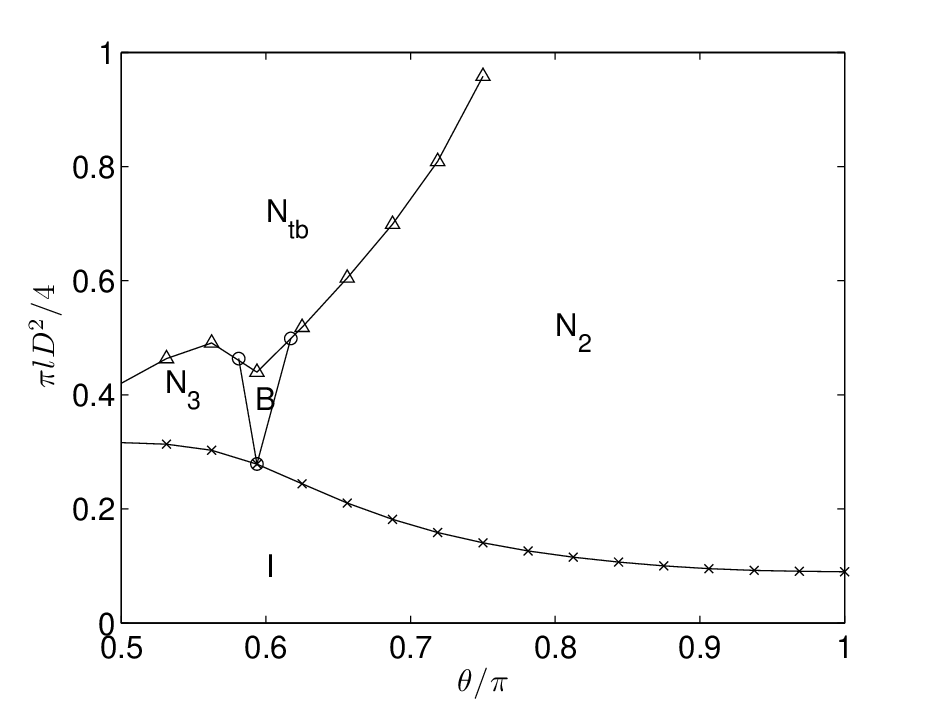}
\includegraphics[width=.48\textwidth,keepaspectratio]{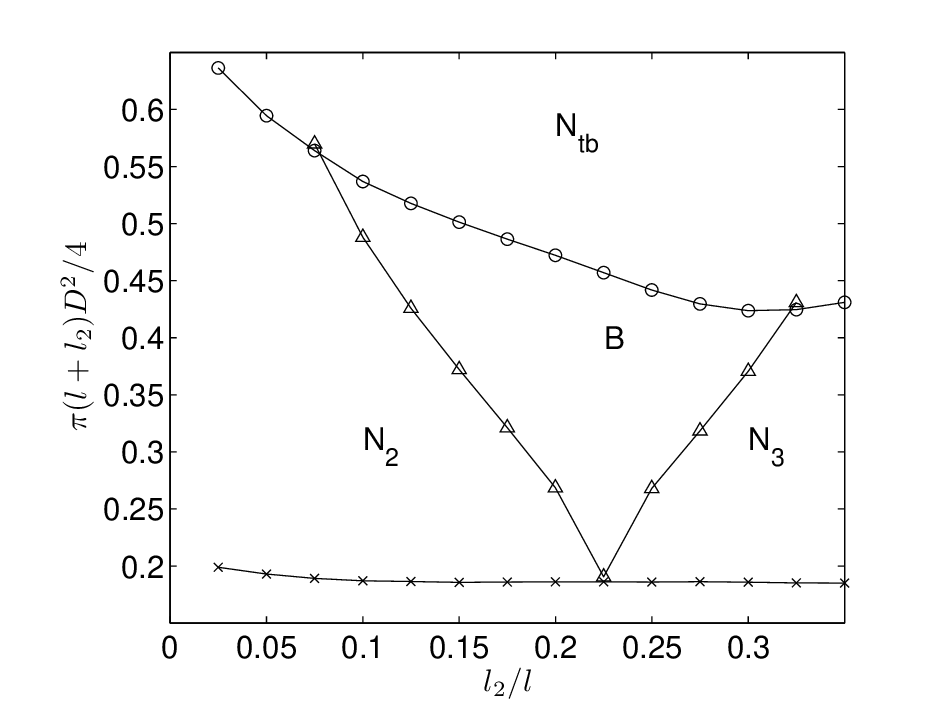}
\caption{\label{phased}Left: phase diagram of bent-core molecules with $\eta=D/l=1/40$. Right: phase diagram of star molecules with $\theta=2\pi/3$, $\eta=1/40$.}
\end{figure}
\begin{figure}
\centering
\includegraphics[width=.48\textwidth,keepaspectratio]{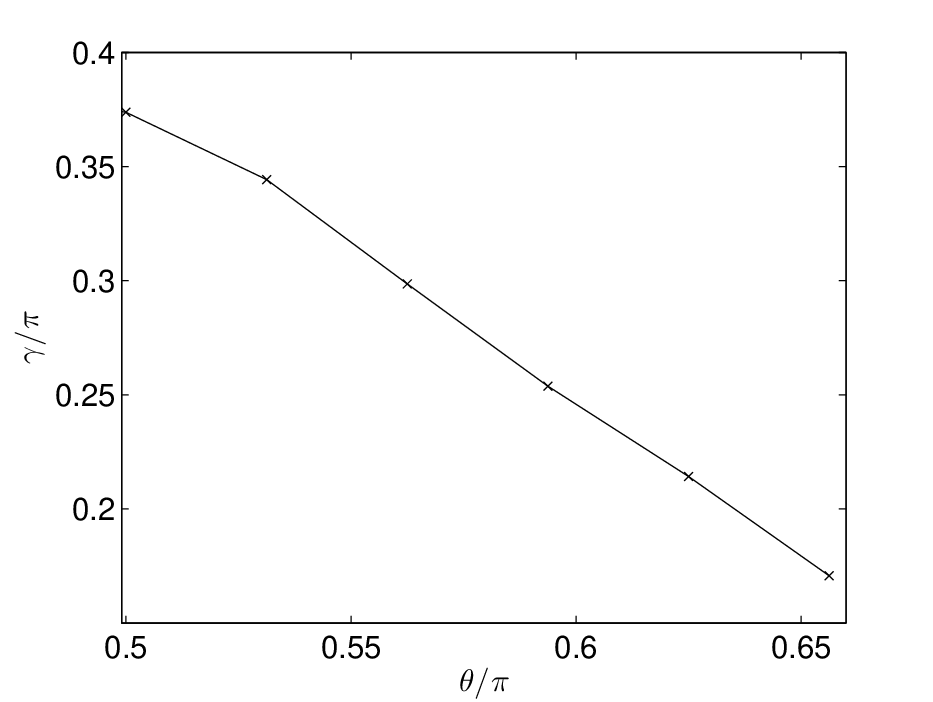}
\includegraphics[width=.48\textwidth,keepaspectratio]{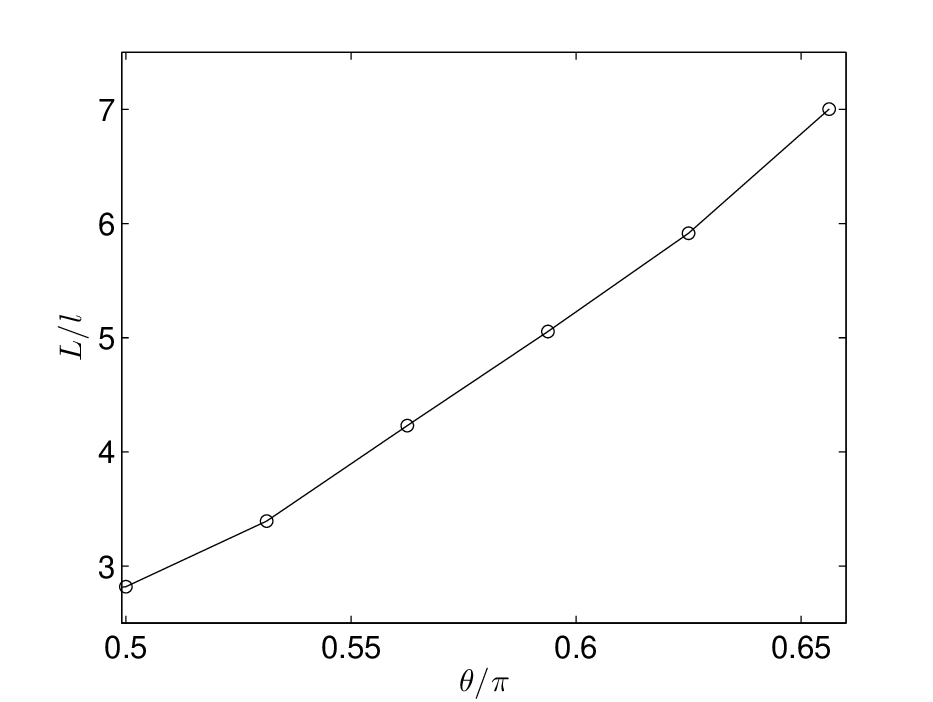}
\caption{\label{pars}The conical angle $\gamma$ and period length $L$ as functions of the bending angle $\theta$ when $\pi lD^2/4=0.7$. }
\end{figure}
The phase diagram of bent-core molecules is given in Fig. \ref{phased} (left), 
where we fix $\eta=1/40$, and use the volume fraction $\pi clD^2/4$ to express the concentration. It shows that $I$ occurs at low volume fraction, and homogeneous nematic phases emerge when it becomes higher. 
As the bending angle $\theta$ decreases from $\pi$, it shows successively $N_2$, $B$ and $N_3$. 
When the volume fraction further grows, the $N_{tb}$ phase occurs if the bending angle $\theta$ is far from $\pi$. 
Experimentally, the $I$--$N_2$--$N_{tb}$ transition is also observed on lowering the temperature for molecules with relatively large bending angle $\theta$ \cite{Ntb,Ntb2}. 
We also plot the conical angle $\gamma$ and period length $L$ as a function of $\theta$ at $\pi clD^2/4=0.7$ (Fig. \ref{pars}). 
We observe that as $\theta$ increases, $\gamma$ decreases while $L$ increases. 
It is worth noting that $L$ is a few times of the dimension of the molecule, giving a very short periodicity that is consistent with the measurements of experiments \cite{Ntb,Ntb2}. 

Next, we study the role of the third arm for star molecules. 
The phase diagram is presented in Fig. \ref{phased} (right). 
Here we fix the bending angle $\theta=2\pi/3$ and focus on the length of the third arm $l_2/l$. 
Now the volume fraction becomes $\pi c(l+l_2)D^2/4$. 
The nematic phases are among those we mentioned above, 
and are sensitive to $l_2$. 
While the transition volume fraction to homogeneous nematic phases is almost unchanged, 
the phase is altered from $N_2$ to $B$ and to $N_3$ when $l_2$ increases. 
The transition volume fraction to $N_{tb}$ is substantially lowered as $l_2$ grows. 
We would view this phase diagram as a typical example of phase behaviors being substantially altered by slight modification on molecular architecture. This is a feature different from rod-like molecules, commonly observed experimentally \cite{JJAP} but not well-understood yet. 

For bent-core molecules, phase diagram about molecular parameters including modulated nematic phases has not been given in existing theoretical models to our knowledge. 
Moreover, in these models that focus on modulated phase, only one director $\bm{n}$, or one second-order tensor $Q$ is included, leading to the absence of the biaxial phase $B$. 
Phase diagram about molecular parameters can only be found in preceding molecular simulations 
\cite{JCP_111_9871,LC_29_483,PRE_67_011703,JCP_123_174907,JCP_129_244903}. 
In these works the molecules studied are thick with $\eta\approx 1/5\sim 1/10$, 
and they did not find the $N_{tb}$ phase. 
The results in \cite{PRL_115_147805} indicate that curved structure can make $N_{tb}$ easier to occur. 
Our results suggest that thin molecules might have the same effect. 

\section{Conclusion\label{concl}}
A tensor model is constructed based on molecular theory for nematic phases of bent-core molecules. 
The free energy is suitable for molecules with the $C_{2v}$ symmetry, 
with the coefficients derived from molecular interaction. 
We use the model to study the nematic phases of bent-core molecules and their analog, star molecules, with the hard-core potential. We obtain the phase diagram about the molecular parameters, including all the nematic phases found experimentally. 

Provided that the molecular symmetry is preserved, the tensor model is able to study 
molecules with arbitrary shape and interactions. 
Hence we aim to apply this model to studying nematic phases of various molecules. 
We are also interested in two- and three-dimensional modulated phases that can be described by the model. 

\appendix
\section{The computation of $M^{(k)}$}
We describe how to compute $M^{(k)}$ for bent-core molecules. 
It works exactly the same way for star molecules. 

\begin{figure}
\centering
\includegraphics[width=.45\textwidth,keepaspectratio]{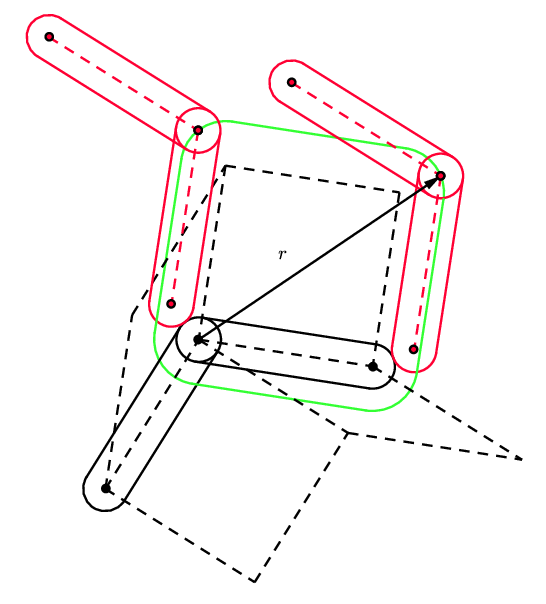}
\hspace{1cm}
\includegraphics[width=.25\textwidth,keepaspectratio]{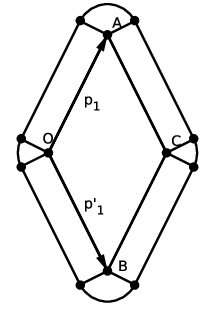}
\caption{\label{region}Left: the region $W$, consisting of four 
spheroparallelograms, whose skeleton parallelograms are drawn in dashed line. 
Right: the intersection of a spheroparallelogram with the plane $z=0$ where the parallelogram lies in. }
\end{figure}
Fix the orientation of a pair of molecules. 
Denote by $W_{ij}$ the region where the relative position let the $i$th arm of one molecule and the $j$th arm of the other touch. 
Then the region where two molecules touch, denoted by $W$, is the union of four $W_{ij}$. 
Each $W_{ij}$ is a spheroparallelogram, obtained by inflating each point in 
a parallelogram to a sphere. One of the $W_{ij}$ is drawn in Fig. \ref{region} (left). 
All the four $W_{ij}$ contain $\hat{O}$, since four arms share the point $\hat{O}$ 
when $\bm{r}=0$. 

Denote
$$
s_{ij}(\bm{n})=\max_{t\bm{n}\in W_{ij}}t. 
$$
Then 
$$
\max_{t\bm{n}\in W}t\triangleq s(\bm{n})=\max_{i,j=1,2}s_{ij}(\bm{n}).
$$ 
For any vector $\bm{n}$, the whole segment $t\bm{n}\ (t\in [0,s(\bm{n})])$ 
lies within $W$, because $W_{ij}$ are convex. 
Hence we can express $M^{(k)}$ by an integral in spherical coordinates, 
which we utilize for numerical calculation, 
\begin{align}
  M^{(k)}(P,P')&=
  \int_{W}\underbrace{\bm{r}\ldots\bm{r}}_{k\ \mbox{\small{times}}}\d\bm{r} 
  =\int_{S^2}\underbrace{\bm{n}\ldots\bm{n}}_{k\ \mbox{\small{times}}}\d\bm{n}
  \int_0^{s(\bm{n})}r^{k+2}\d r\nonumber\\
  &=\int_{S^2}\frac{1}{k+3}s(\bm{n})^{k+3}
  \underbrace{\bm{n}\ldots\bm{n}}_{k\ \mbox{\small{times}}}\d\bm{n}.   
\end{align}

Now it remains to compute $s_{ij}$. 
Place the parallelogram in the plane $z=0$. 
Denote by $R$ the intersection point of the ray $t\bm{n}~(t\ge 0)$ and the boundary of 
the spheroparallelogram. 
The boundary of a spheroparallelogram consists of two planes $z=\pm D$, four cylindrical surfaces at four edges, and four spherical surfaces at four vertices. 
We need to determine where $R$ lies, for which the procedure below is followed: 
\begin{itemize}
\item Compute the intersection point of the ray $t\bm{n}$ and the plane 
$z=\pm D$. Then examine whether its projection on the plane $z=0$ lies in 
the parallelogram $OACB$, drawn in Fig. \ref{region} (right). If it does, the $R$ lies on the flat surface 
of the spheroparallelogram. 
\item Determine whether the ray $t\bm{n}$ intersects with any of the spheres
on the corner. If yes, compute the farthest intersection point and examine 
its projection on the plane $z=0$. If it lies in the corresponding sector 
(located at the corners in Fig. \ref{region}), 
$R$ lies on the spherical surface of the spheroparallelogram. 
\item Now we know that $R$ lies on the cylindrical surface of the 
spheroparallelogram, and it is easy to distinguish which cylinder it locates. 
\end{itemize}

\section{The properties of the Boltzmann distribution}
\subsection{The existence of the Lagrange multiplier}
Let
$$
\mathcal{A}=\{(s,q_{ij})|\rho:SO(3)\to \mathbb{R}^+,~\int\d P \rho=1,~
s=\int \d P \rho m_{11},~q_{ij}=\int\d P \rho m_{ij}^2,~i,j=1,2.\}. 
$$
\begin{theorem}\label{clos_exist}
  Each $(s, q_{ij})\in \mathcal{A}$ is subject to the constraints in 
  \eqref{range}. For any $(s, q_{ij})$ satisfying \eqref{range}, with $s^2<q_{11}$ substituted by $s<q_{11}$, there exists a unique 
  solution to the minimization problem
  \begin{align*}
    &\inf \int_{SO(3)}\d P \rho(P)\log \rho(P),\\
    \mbox{s.t. } &\int \d P \rho(P)=1,\\
    &\int \d P\bm{m}_1\rho(P)=(s,0,0)^T,\\
    &\int \d P\bm{m}_1\bm{m}_1\rho(P)=\mbox{diag}(q_{11},q_{12},1-q_{11}-q_{12}),\\
    &\int \d P\bm{m}_2\bm{m}_2 \rho(P)=\mbox{diag}(q_{21},q_{22},1-q_{21}-q_{22}).
  \end{align*}
  The solution takes the form
  $$
  \rho(P)=\frac{1}{Z}\exp\left(b_1m_{11}+\sum_{i,j=1,2}b_{ij}m_{ij}^2\right), 
  $$
  where
  $$
  Z=\int\d P \exp\left(b_1m_{11}+\sum_{i,j=1,2}b_{ij}m_{ij}^2\right). 
  $$
\end{theorem}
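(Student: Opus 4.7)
The necessity direction --- every $(s,q_{ij})\in\mathcal{A}$ satisfies (\ref{range}) --- is algebraic plus a measure-theoretic non-degeneracy argument. I would begin by exploiting $P^TP=PP^T=I$: summing rows gives $q_{i1}+q_{i2}+q_{i3}=1$ and summing columns gives $q_{1j}+q_{2j}+q_{3j}=1$, which immediately imply the non-strict analogues of the four bounds $q_{11}+q_{12}\le 1$, $q_{21}+q_{22}\le 1$, $q_{11}+q_{21}\le 1$, $q_{12}+q_{22}\le 1$. The identity $q_{11}+q_{12}+q_{21}+q_{22}=1+q_{33}$, obtained from $m_{13}^2+m_{23}^2=1-m_{33}^2$, reduces the sum inequality to $q_{33}\ge 0$. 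The bound $s^2\le q_{11}$ is Cauchy--Schwarz for $m_{11}\cdot 1$ under the measure $\rho\,\d\nu$. To upgrade each inequality to strict, I would note that saturation forces $\rho$ to be supported on a real-algebraic hypersurface of $SO_3$ (such as $\{m_{ij}=0\}$ in the first batch, or $\{m_{11}=s/\text{const}\}$ in the Cauchy--Schwarz case), which has zero $\d\nu$-measure and is therefore incompatible with $\rho\in L^1(\d\nu)$ with $\int\rho\,\d\nu=1$.

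For existence and uniqueness of the entropy minimizer I would set $\beta=(b_1,b_{11},b_{12},b_{21},b_{22})$, $T(P)=(m_{11},m_{11}^2,m_{12}^2,m_{21}^2,m_{22}^2)$, and
\[
\Phi(\beta)=\log\int_{SO_3}\exp\bigl(\beta\cdot T(P)\bigr)\,\d\nu.
\]
Because $SO_3$ is compact and $T$ is bounded, $\Phi$ is smooth and finite on all of $\mathbb{R}^5$, with $\nabla\Phi(\beta)=\mathbb{E}_{\rho_\beta}[T]$ for the Gibbs density $\rho_\beta=Z^{-1}e^{\beta\cdot T}$. The Hessian of $\Phi$ is the covariance of $T$ under $\rho_\beta$; it is positive definite provided the components $\{1,m_{11},m_{11}^2,m_{12}^2,m_{21}^2,m_{22}^2\}$ are affinely independent as functions on $SO_3$, which I would verify by evaluating at a short list of matrices such as $I$, $\mbox{diag}(1,-1,-1)$, $\mbox{diag}(-1,-1,1)$ and a few rotations about the coordinate axes. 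Strict convexity of $\Phi$ yields injectivity of $\nabla\Phi$. On the primal side, the entropy functional $\int\rho\log\rho\,\d\nu$ is strictly convex and weakly lower semicontinuous on the affine slice cut out by the moment constraints, so whenever that slice is non-empty the minimizer is unique; a standard Lagrange multiplier / Fenchel duality argument identifies it with $\rho_\beta$ where $\beta=(\nabla\Phi)^{-1}(s,q_{ij})$, which after a rotation reduces to the form (\ref{BRep}) advertised in the theorem.

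The substantive step, and the main obstacle, is surjectivity of the moment map onto the region $\mathcal{R}$ defined by (\ref{range}); together with the previous step this yields $\mathcal{A}=\mathcal{R}$. The image of $\nabla\Phi$ is open (local diffeomorphism) and contained in $\mathcal{R}$ by the necessity direction; it is non-empty, containing the isotropic point $(0,1/3,1/3,1/3,1/3)=\nabla\Phi(0)$. The plan is to show the image is also closed in $\mathcal{R}$ by a coercivity estimate: if $\beta_n$ diverges along a unit direction $\hat\beta$, then $\rho_{\beta_n}$ concentrates on the maximizing set of $\hat\beta\cdot T$ inside $SO_3$, and a direction-by-direction case analysis shows that $\nabla\Phi(\beta_n)$ approaches $\partial\mathcal{R}$ (that is, it saturates one of the constraints listed in (\ref{range})). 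Since $\mathcal{R}$ is open and connected and the image is non-empty, open and closed in $\mathcal{R}$, the two must coincide. The delicate part is precisely this coercivity step: verifying that every divergent direction of $\beta$ forces saturation of some specific constraint, and that no interior cluster point can arise --- this is where the geometry of $SO_3$ interacts with the particular choice of sufficient statistics, and where I expect the appendix to invest most of the technical effort.
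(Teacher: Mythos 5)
Your necessity argument (orthogonality of the rows and columns of $P$, zero measure of the algebraic sets $\{m_{ij}=0\}$ to upgrade to strict inequalities, Cauchy--Schwarz for $s^2<q_{11}$) and your uniqueness argument (strict convexity of $\rho\log\rho$) coincide with the paper's. Your dual function $\Phi(\beta)-\beta\cdot(s,q_{ij})$ is, up to exponentiation, exactly the paper's functional $J$ in (\ref{JFunc}), so up to that point the two proofs are the same.

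The gap is in the step you yourself flag as "the substantive step": you never actually prove surjectivity, you only describe two strategies (openness-plus-closedness in $\mathcal{R}$, or coercivity via a "direction-by-direction case analysis") and defer the case analysis. That deferred step \emph{is} the theorem: to show that a divergent sequence $\beta_n$ forces $\nabla\Phi(\beta_n)\to\partial\mathcal{R}$ you must show that for every unit direction $\hat\beta$ the concentration set of $\hat\beta\cdot T$ produces a limiting moment that saturates one of the inequalities in (\ref{range}) --- equivalently, that (\ref{range}) is a \emph{complete} list of the linear and quadratic constraints on the moment body, which is precisely the assertion $\mathcal{A}=\mathcal{R}$ you are trying to establish; as stated your plan is close to circular. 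The paper's resolution is a concrete device you are missing: it exhibits eight explicit rotations $P_1,\dots,P_8$ and weights $\lambda_1,\dots,\lambda_8$ (depending on an auxiliary parameter $\lambda$, whose admissible window $\max\{q_{11},q_{22},1-q_{12}-q_{21}\}<\lambda<\min\{1-q_{12},1-q_{21},q_{11}+q_{22}\}$ is nonempty exactly because of (\ref{range})) such that $\lambda_i>0$, $\sum_i\lambda_i=1$ and $\sum_i\lambda_i I(P_i)=0$ for every $(b_1,b_{ij})$, where $I(P)=b_1(m_{11}-s)+\sum b_{ij}(m_{ij}^2-q_{ij})$. Hence if $I(P_i)\le 0$ for all $i$ then $I(P_i)=0$ for all $i$, and the resulting linear system forces $b_1=b_{ij}=0$; so for every nonzero direction some $I(P_i)>0$, $J$ blows up along every ray, and coercivity of the convex function $J$ gives a minimizer, i.e.\ existence. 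In effect the paper writes the target moment vector as a strictly positive convex combination of finitely many points of the moment map's image, which is the interiority certificate your outline lacks. Without this (or an equivalent) construction, your proof of the existence half and of $\mathcal{A}=\mathcal{R}$ is incomplete.
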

\begin{proof}
  Note that 
  \begin{align*}
    m_{13}^2&=1-m_{11}^2-m_{12}^2\ge 0, \\
    m_{23}^2&=1-m_{21}^2-m_{22}^2\ge 0, \\
    m_{31}^2&=1-m_{11}^2-m_{21}^2\ge 0, \\
    m_{32}^2&=1-m_{12}^2-m_{22}^2\ge 0, \\
    m_{33}^2&=m_{11}^2+m_{12}^2+m_{11}^2+m_{12}^2-1\ge 0, 
  \end{align*}
  and that for $i,j=1,2,3$, the measure of the set $\{P:m_{ij}=0\}$ is zero. 
  Thus the inequalities about only $q_{ij}$ in (\ref{range}) are obtained.
  The inequality about $s$ in (\ref{range}) comes from 
  $(\int\d P fm_{11})^2\le \int\d P fm_{11}^2$, and the equality holds 
  only if $fm_{11}=\lambda f$ holds for a constant $\lambda$, 
  which implies that $f=0$ for $m_{11}\ne\lambda$. Again we note that the 
  measure of the set $\{P:m_{11}=\lambda\}$ is zero. 

  The uniqueness of $f$ is deduced immediately from the strict convexity of 
  $f\log f$ about $f$. 

  To prove the existence, consider the function
  \begin{equation}
  J(b_1,b_{ij})=\int\d P \exp\left(b_1(m_{11}-s)
  +\sum_{i,j=1,2}b_{ij}(m_{ij}^2-q_{ij})\right). \label{JFunc}
  \end{equation}
  A stationary point of $J$ satisfies $\partial J/\partial b_1
  =\partial J/\partial b_{ij}=0$, which yields
  \begin{align*}
    s&=\frac{1}{Z}\int\d P \exp\left(b_1m_{11}
    +\sum_{i,j=1,2}b_{ij}m_{ij}^2\right)m_{11}, \\
    q_{ij}&=\frac{1}{Z}\int\d P \exp\left(b_1m_{11}
    +\sum_{i,j=1,2}b_{ij}m_{ij}^2\right)m_{ij}^2, \quad i,j=1,2. 
  \end{align*}
  Because of the uniqueness, the stationary point of $J$ solves the 
  minimization problem. We will prove that 
  \begin{equation}\label{upperunb}
  \lim_{b_1^2+\sum b_{ij}^2\to \infty}J=+\infty. 
  \end{equation}
  Since $J$ is bounded from below, (\ref{upperunb}) indicates the existence 
  of a minimizer. 
  
  For (\ref{upperunb}), it is sufficient to prove that for any 
  $$
  (b_1,b_{11},b_{12},b_{21},b_{22})\neq (0,0,0,0,0), 
  $$
  there exists a $P$ such that 
  $$
  I(P)=b_1(m_{11}-s)+\sum_{i,j=1,2}b_{ij}(m_{ij}^2-q_{ij})>0. 
  $$
  Let
  \begin{align*}
    &P_1=\left(
    \begin{array}{ccc}
      1&0&0\\
      0&0&1\\
      0&-1&0
    \end{array}
    \right),
    &P_2=&\left(
    \begin{array}{ccc}
      -1&0&0\\
      0&0&1\\
      0&1&0
    \end{array}
    \right),
    &P_3=&\left(
    \begin{array}{ccc}
      1&0&0\\
      0&1&0\\
      0&0&1
    \end{array}
    \right),
    &P_4=&\left(
    \begin{array}{ccc}
      -1&0&0\\
      0&1&0\\
      0&0&-1
    \end{array}
    \right),\\
    &P_5=\left(
    \begin{array}{ccc}
      0&1&0\\
      0&0&1\\
      1&0&0
    \end{array}
    \right),
    &P_6=&\left(
    \begin{array}{ccc}
      0&0&1\\
      1&0&0\\
      0&1&0
    \end{array}
    \right),
    &P_7=&\left(
    \begin{array}{ccc}
      0&0&-1\\
      0&1&0\\
      1&0&0
    \end{array}
    \right),
    &P_8=&\left(
    \begin{array}{ccc}
      0&1&0\\
      1&0&0\\
      0&0&-1
    \end{array}
    \right).
  \end{align*}
  It is straightforward to verify that 
  $$
  \lambda_iI(P_i)=0
  $$
  holds for arbitrary $(b_1,b_{ij})$ where 
  \begin{align*}
    \lambda_1&=(\lambda-q_{22})\left(\frac{1}{2}+\frac{s}{2q_{11}}\right),&
    \lambda_2&=(\lambda-q_{22})\left(\frac{1}{2}-\frac{s}{2q_{11}}\right),\\
    \lambda_3&=(q_{11}+q_{22}-\lambda)\left(\frac{1}{2}+\frac{s}{2q_{11}}\right),&
    \lambda_4&=(q_{11}+q_{22}-\lambda)\left(\frac{1}{2}-\frac{s}{2q_{11}}\right),\\
    \lambda_5&=1-\lambda-q_{12},&
    \lambda_6&=1-\lambda-q_{21},\\
    \lambda_7&=\lambda-q_{11},&
    \lambda_8&=q_{12}+q_{21}-(1-\lambda).
  \end{align*}
  Here $\lambda$ is a real number to be determined. We choose a $\lambda$ 
  such that $\lambda_i> 0$. It is equivalent to
  \begin{align*}
    \lambda -q_{22},\ q_{11}+q_{22}-\lambda,\ 
    \lambda -q_{11},\
    1-\lambda -q_{12},\ 
    1-\lambda -q_{21},\ 
    q_{21}+q_{12}-(1-\lambda )>0,
  \end{align*}
  which yields
  $$
  \max\{q_{11},q_{22},1-q_{12}-q_{21}\}<\lambda<\min\{1-q_{12},1-q_{21},q_{11}+q_{22}\}.
  $$
  From the constraints on $q_{ij}$, the upper bound is greater than the lower 
  bound, which guarantees the existence of $\lambda$. Note that 
  $$
  \sum_{i=1}^8\lambda_{i}=1. 
  $$
  Let
  $$
  A=b_1s+\sum_{i,j=1,2}b_{ij}q_{ij}.
  $$
  We claim that $I(P_i)>0$ for some $i$. Otherwise $I(P_i)=0$ for every $i$. 
  Expanding these equalities, we have
  \begin{align*}
    \pm b_1+b_{11}=\pm b_1+b_{11}+b_{22}=b_{21}=b_{12}=b_{22}=b_{12}+b_{21}=A.
  \end{align*}
  It is easy to deduce that $b_1=b_{ij}=0$. 
\end{proof}

\subsection{Some equalities}
The derivatives of $F_{entropy}$ about the tensors are 
\begin{equation}\label{average}
  \frac{1}{\beta_0}\frac{\partial F_{entropy}}{\partial (\bm{p},Q_1,Q_2)}=(\bm{b},B_1,B_2). 
\end{equation}
We prove it for $\bm{p}$ as an example. Note that 
\begin{equation}
  \frac{\partial\log Z}{\partial (\bm{b},B_1,B_2)}=
  \frac{1}{Z}\frac{\partial Z}{\partial (\bm{b},B_1,B_2)}=(\bm{p},Q_1,Q_2). 
\end{equation}
Hence 
\begin{align*}
  &\frac{1}{\beta_0}\frac{\partial F_{entropy}}{\partial \bm{p}}\\
  =&\frac{\partial (\bm{b}\cdot\bm{p}+B_1:Q_1+B_2:Q_2-\log Z)}
  {\partial \bm{p}}\\
  =&\bm{b}
  +\bm{p}\cdot\frac{\partial \bm{b}}{\partial \bm{p}}
  +Q_1:\frac{\partial B_1}{\partial \bm{p}}
  +Q_2:\frac{\partial B_2}{\partial \bm{p}}
  -\frac{\partial \log Z}{\partial \bm{p}}\\
  =&\bm{b}
  +\frac{\partial \log Z}{\partial \bm{b}}\cdot\frac{\partial \bm{b}}{\partial \bm{p}}
  +\frac{\partial \log Z}{\partial B_1}:\frac{\partial B_1}{\partial \bm{p}}
  +\frac{\partial \log Z}{\partial B_2}:\frac{\partial B_2}{\partial \bm{p}}
  -\frac{\partial \log Z}{\partial \bm{p}}\\
  =&\bm{b}+\frac{\partial \log Z}{\partial (\bm{b},B_1,B_2)}\cdot\frac{\partial (\bm{b},B_1,B_2)}{\partial \bm{p}}-\frac{\partial \log Z}{\partial \bm{p}}\\
  =&\bm{b}. 
\end{align*}

The derivatives of $F$ about $b_{ij}(x)$ can be written as 
\begin{equation}
\frac{\partial F}{\partial b_{ij}(x)}
=\frac{\partial q_{kl}(x)}{\partial b_{ij}(x)}\frac{\partial F}{\partial q_{kl}(x)}. 
\end{equation}
And note that 
$$
\frac{\partial q_{kl}(x)}{\partial b_{ij}(x)}=\frac{\partial^2 Z}{\partial b_{ij}(x)\partial b_{kl}(x)} 
=\left<(m_{11},m_{11}^2,m_{12}^2,m_{21}^2,m_{22}^2)^T(m_{11},m_{11}^2,m_{12}^2,m_{21}^2,m_{22}^2)\right>. 
$$
is positive definite. Thus 
\begin{equation}
\left(\frac{\partial F}{\partial b_{ij}(x)}\right)^T\frac{\partial F}{\partial q_{ij}(x)}=\left(\frac{\partial F}{\partial q_{ij}(x)}\right)^T\frac{\partial^2 Z}{\partial b_{ij}(x)\partial b_{kl}(x)} \frac{\partial F}{\partial q_{ij}(x)}>0. 
\label{grad_b}
\end{equation}

\textbf{Acknowledgments}. 
P. Zhang is partly supported by National Natural Science Foundation of China 
(Grant No. 11421101 and No. 11421110001). 

\bibliographystyle{plain}
\bibliography{bib_bent} 

\end{document}